\title{Distributed Algorithm for Dynamic Data-Gathering in Sensor Network}
\author{Subhasis Bhattacharjee}
\institute{University of Bristol\\
Merchant Venturers Building\\
Bristol, BS8 1UB\\
\email{subhasis18@gmail.com}}
\begin{document}

\maketitle

\begin{abstract}

In wireless sensor networks (WSN), each sensor is responsible for sensing environmental conditions and sending them to the one or more base stations. Battery-operated sensors are severely constrained by the amount of energy that can be spend for transmitting these sensed data. However, aggregation of data (including removal of redundant data) at intermediate sensors and forwarding of aggregate data reduce overall energy consumptions in WSN. In general, data gathering refers to the process of periodic collection of sensed data from various sensors to one or more base stations (BS). Energy efficient data gathering scheduling is essential for improving the lifetime of WSN. In this paper, we propose a distributed algorithm to compute data-gathering schedule that aim to improve the lifetime of WSN by suitably selecting energy-efficient data-flow paths from various sensors to the base station. For a multihop WSN with $n$ sensors, the proposed algorithm first computes a schedule in $O(n^2)$ time steps, and then this schedule is periodically updated based the residual energy and the feedback received from the BS. The system performs approximately $\log(\mathcal{L})$ schedule updates where $\mathcal{L}$ is the expected lifetime of the system in number of data-gathering rounds. Moreover, each updation process uses the existing active schedule (data-flow path) - thus consuming only a small fraction of a single data gathering round activity. Such an algorithm thus could precisely incorporate the energy consumptions due to updates and related activities. Moreover, our algorithm does not assume any global knowledge of the topology or the positions of various sensors. Through simulation study, we found that our proposed algorithm achieves significantly higher network lifetime compared to existing data-flow schedules based on the Minimum Spanning Tree (MST), the Shortest Path Tree (SPT), the Weighted Rooted Tree (WRT) \cite{Bhattacharjee07}.
\end{abstract}

\section{Introduction}

A typical Wireless Sensor Network (WSN) consists of a large number of inexpensive sensor nodes, densely deployed over an area to gather information about the surroundings in a self-configured and unattended fashion for a long period of time. The basic operation in WSN is the periodic sensing, gathering and transmission of data by individual sensors to one or more base stations (BS). Since sensor nodes are battery driven and there is no facility for recharging in general, nodes must route data in most energy-efficient way. The energy of a node is drained out with time for data sensing, computing and communicating with other nodes and the base station. Following conventional communication model, we see that the amount of energy spent in transmitting a packet has a fixed cost in electronics and a variable cost that depends on the distance between the communicating nodes, whereas, the reception of a data packet has a fixed energy cost in electronics. In WSN, data from all nodes are collected and transmitted to the BS in periodic intervals called rounds. A {\it data gathering schedule} specifies how data packets are routed and collected at the base station in each round. According to the most restricted definition, the {\it lifetime} of a WSN is defined to be the time duration (i.e., the number of rounds) for which the base station can gather data from the sensors.

A simple approach is to transmit the sensed data directly to the BS. The BS being located far away, the cost of transmitting directly to the BS from any node can be significantly high. Moreover, it may not be feasible when sensors are deployed over a large region. So, it is better to route data in multihop paths via intermediate nodes. Here, data fusion \cite{Li08} or aggregation helps in reducing the amount of data flow. The study on data fusion or aggregation essentially shifts the focus to data centric approaches \cite{Intanagonwiwat00}, \cite{Krishnamachari02}, \cite{Lindsey01}, \cite{Lindsey02}, \cite{Madden02} for reducing traffic in WSN. Usually, the gathered/aggregated data move from node to node, get fused, and eventually received at the BS. To maximize network lifetime, energy-efficient data gathering schedules are of primary interest in a successfully designed WSN. However, the problem of finding a data gathering schedule that maximizes the lifetime is in general NP-hard. Moreover, the challenge here is to have a reasonably good data gathering schedule that can also adapt with changes in the topology due to node failure.

The problem of finding an efficient data gathering algorithm that maximizes the lifetime, referred to as the Maximum Lifetime Data Aggregation (MLDA) problem \cite{Intanagonwiwat00}, \cite{Kalpakis03} is in general NP-hard \cite{Ramanathan00}. In \cite{Kalpakis03}, the problem is mapped to integer linear programming and later solved in polynomial time by linear relaxation to produce a linear approximation to the optimal solution.

Extensive research has been done so far on maximizing the lifetime of wireless sensor networks \cite{Song09},\cite{Wang08}. Several protocols attempt to improve lifetime by reducing energy consumption in MAC for WSN \cite{Ye02}, \cite{Yu07}. In \cite{Schurgers02} the lifetime of WSN is enhanced by efficient use of MAC. It reduced MAC overhead in WSN by using STDMA and an encoded representation of the addresses in data packets. It is shown by simulation that the MAC overhead is reduced by a factor of three compared to the existing approaches reducing the energy consumption per transmission and reception.

The other approach to improve the lifetime is by partitioning the nodes into two sets, one as operational ({\it awake}) and the other as backup ({\it asleep}) \cite{Cerpa02}, \cite{Xu01}, \cite{Chen02}, \cite{Hong08}, \cite{HongTA}. However, such partitioning requires knowledge of their positions using GPS or similar devices. 

Some study \cite{Lin97}, \cite{Banerjee07L} have also been carried out using hierarchical routing based on clustering to improve lifetime. The HEED protocol \cite{Younis04} proposes a distributed clustering approach that enhances the lifetime by distributing energy consumption among the cluster head nodes, terminating within a constant number of steps and considering a combination of energy and communication cost for selecting cluster heads. In LEACH \cite{Heinzelman00}, a distributed data gathering procedure is proposed using clustering based protocol. PEGASIS \cite{Lindsey02} improved the performance by forming a chain with the sensors where each sensor communicates with the base station in turn to deliver the aggregated data. But all these works assumed a single-hop WSN (with global knowledge of topology), where each sensor can communicate directly with each other and also with the base station. A distributed approach for computing data-gathering schedule in multihop WSN is presented in \cite{Bhattacharjee07}. It does not assume any global knowledge of the topology. It performs better than any data-gathering schedule generated from shortest path tree and minimum spanning tree (MST). Moreover, it also performs better than PEGASIS \cite{Lindsey02} in general. It is worthwhile to note that the power assignment following a MTS in a graph achieves $2$-approximation in terms of optimal power assignment \cite{Li05}. Though we are not aware of any similar result applicable to data-gathering in WSN, the data-gathering schedule produced out of a MST is in general an energy-efficient one.

Many energy-efficient routing protocols have been proposed so far that use cluster-based routing \cite{Dimokas10} \cite{Dahnil12} \cite{Aslam11} \cite{Chamam09} in WSN. Here high energy is consumed for inter-cluster communication by the cluster-heads. So many of these schemes are adaptive in general and recomputes cluster heads periodically - but there is no systematic study on exact number of re-computations, the exact energy and time overhead. As oppose to that, in this paper, we give precisely the exact number of reschedules, their overhead and delay penalty.

The paper is organized as follows. Section~\ref{sec:SystemModel} presents the system model. The proposed algorithm is presented in Section 3. In Section 4, we describes the operation of the WSN and dynamic updates of the data gathering schedule. The simulation results are presented in Section 5. We conclude our paper in Section 6.

\section{System Model}\label{sec:SystemModel}

We assume that the system initializes with a set of $n$ sensors $\{v_1, v_2, \cdots, v_n\}$, randomly distributed over a region along with a fixed base station $BS$ as a special node. The BS can transmit directly to all sensors. But each sensor $v_i$ is capable of transmitting a packet within its limited range $R_i$. So, the nodes within range $R_i$ are adjacent neighbors of $v_i$ and are reachable from $v_i$ in single hop. The rest of the nodes can be reached in multihops via intermediate nodes. The maximum transmission power $P_{max}$ of a sensor node covers the maximum transmission range $R_{max}$ ($P_{max}$ is assumed constant). This sensor network can be represented as a graph as defined below.

\begin{definition}
The topology graph $G(V, E)$ representing a sensor network is an undirected graph with a node set $V = \{v_1, v_2,...v_n, BS\}$, where  two nodes are adjacent if and only if the Euclidean distance between the two is less than $R_{max}$.
\end{definition}

Using the popular existing energy consumption model \cite{Heinzelman00}, \cite{Lindsey01}, the energy consumed by a sensor $v_i$ in receiving a $k$-bit message is $$Rx = \epsilon_{elec} \times k.$$ The energy consumed by sensor $v_i$ to transmit a $k$-bit message to $v_j$ is $$Tx_{i,j} = Rx + \epsilon_{amp} \times k \times d_{i,j}^2,$$ where $\epsilon_{elec}$ is the energy required by the transmitter or receiver circuit and $\epsilon_{amp}$ is that for the transmitter amplifier to transmit single bit, and $d_{i,j}$ is the Euclidean distance between $v_i$ and $v_j$. We assume that the radio channel is symmetric (i.e., the energy consumed to transmit a message from node $v_i$ to node $v_j$ is same as in the reverse direction). 

Each sensor $v_i$ is capable recording its available (residual) energy $RE_i$ - which is same for all nodes at the beginning. However, as the WSN performs periodic data gathering - the available energy reduces proportionately in each node over time. 

We assumed that the system is initially connected and so, $G(V,E)$ is connected. The the sensors are assumed to be static. $G(V,E)$ is undirected since the transmission medium under consideration is assumed to be symmetric.

\begin{definition}
A weighted topology graph $G(V, RE, E, W)$ is the topology graph $G(V, E)$ with $RE_i$ the residual energy of each node $v_i$ and the weight $w_{i,j} \in W$ for each edge $(v_i, v_j) \in E$, where $w_{ij} = Tx_{i,j}$ represents the amount of energy required to transmit a data packet of $k$ bits from $v_i$ to $v_j$ and vice-versa.
\end{definition}

The proposed algorithm (in Section~3) computes a {\it rooted tree} that can be used as a data gathering schedule where the BS is the root node. The goal of our algorithm is to maximize the data gathering rounds (simply, the number of rounds) based on energy consumption in message transmission and reception at each sensor node for the underlying weighted topology graph $G(V, RE, E, W)$ corresponds to the given WSN.

We denote a {\it rooted tree} by $T(v_t, V_T, RE_T, E_T)$, where, $v_t$ is the root of the tree, $V_T$ is the set of nodes, $RE_T$ is residual energy of nodes, and $E_T$ is the set of directed edges of the tree. A {\it weighted rooted tree} (WRT), denoted as $T(v_t, V_T, RE_T, E_T, W_T)$, is a rooted tree where $w_{ij} \in W_T$ is the weight of the directed edge $ {(v_i, v_j)} \in E_T$ that represents the transmission power required for the link $ {(v_i, v_j)}$.

\begin{definition}
For a weighted rooted tree $T(v_t, V_T, RE_T, E_T, W_T)$, the {\it expected lifetime ($\mathcal{L}$)} for each node $v_i \in V_T-\{v_t\}$ is $\mathcal{L}_i = \frac{RE_i}{d_{in} \times Rx + w_{i,out(v_i)}}$, where $d_{in}$ denotes the in-degree of $v_i$ and $out(v_i) \in V_T$ is the node to which $v_i$ forwards its data. The {\it minimum expected lifetime} is defined as $\mathcal{L}_{min} = \textrm{ min }\{\mathcal{L}_i | \forall v_i \in V-\{v_t\}\}$.
\end{definition}

Therefore, we restate the lifetime improvement problem as the problem of finding a rooted tree $T(v_t, RE_T, V_T, E_T', W_T')$, (i.e., rooted at $BS$ denoted as $v_t$), such that the minimum expected lifetime $\mathcal{L}_{min}$ is maximized, where, $E' \subseteq E$, and $w'_{i,j} = w_{i,j}, \forall (v_i, v_j) \in E'$. Then, the nodes can follow the scheduling determined by $T(v_1, V, E', W')$.

\section{Distributed Algorithm for Data-Gathering Schedule}

We do not assume any global knowledge of the topology. We just assume that each node knows the information about its (immediate / one-hop) neighbors (their residual energy, weights, etc). Let $N_i$ denotes the set of nodes adjacent to $v_i$. The algorithm starts by considering the node $BS$ as the only member of WRT $T^0$. In $k$th iteration step the tree is $T^k(BS, V^k, E^k, W^k)$. In $k+1$th iteration, a node $v_i \not\in V^k$, but is adjacent to some node $v_j \in V^k$, is included in $T^k$ such that the $\mathcal{L}_{min}$ is maximal within the set of nodes $V^k \cup \{v_j\}$ (i.e., after inclusion of the node $v_j$). This process continues until $T$ covers all $n$ nodes. A formal presentation of the algorithm is given below.

\begin{algorithm}[htb]
\begin{algorithmic}
\STATE{{\it Initialization}}
	\STATE{$T^0 = (BS, \{BS\}, \phi, \phi, \phi)$}
	\STATE{}
\STATE{{\it In Step $k$}}
	\FOR{each $v_i \in V^{(k-1)}$}
	\STATE{Find $v_j$ such that $\mathcal{L}_j$ is maximum $\forall v_j \in N_i$}
	\STATE{Compute $\mathcal{L}_i = \frac{RE_i}{(d_{in}+1) \times Rx + w_{i,out(v_i)}}$}
	\IF{$d_{in} > 0$}
		\STATE{Wait to collect all $(v_j, \mathcal{L}_i, v_{j'}, \mathcal{L}_{j'})$ from in-neighbors of $v_i$}
		\STATE{Compute max tuple $(v_m, \mathcal{L}_m, v_{m'}, \mathcal{L}_{m'})$ from received tuples and its own}
		\STATE{This $(v_m, \mathcal{L}_m, v_{m'}, \mathcal{L}_{m'})$ corresponds to maximum of the paired tuples}
	\ENDIF
	\STATE{Send $(v_m, \mathcal{L}_m, v_{m'}, \mathcal{L}_{m'})$ message to $out(i)$ in $T^{k-1}$}
	\IF{the message $v_i, v_k)$ is received from BS}
		\STATE{Mark $v_k$ as in neighbor of $v_i$ in $T^k$}
		\STATE{Add $(v_k, v_i)$ in $E^k$}
	\ENDIF
\ENDFOR
\FOR{each node $v_j \not\in T^{(k-1)}$}
	\IF{a message is received from BS}
		\STATE{$v_i$ is included in $V^{k}$}
	\ELSE
		\STATE{Wait for message from BS}
	\ENDIF
\ENDFOR
\STATE{}
\STATE{{\it In Step $k$ for $BS$}}
\IF{received $(v_j, \mathcal{L}_i, v_{j'}, \mathcal{L}_{j'})$ message}
\STATE{Wait to collect all $(v_j, \mathcal{L}_i, v_{j'}, \mathcal{L}_{j'})$ from in-neighbors of BS}
\STATE{Compute max tuple $(v_m, \mathcal{L}_m, v_{m'}, \mathcal{L}_{m'})$ from received tuples}
\STATE{Broadcast $(v_m, v_{m'})$ message}
\ENDIF
\STATE{Record min expected lifetime $\mathcal{L}_{min}$ for future use}
\end{algorithmic}
\caption{ComputeFullSchedule} \label{algo:ComputeFullSchedule}
\end{algorithm}

\begin{figure}[!ht]
\centerline{\includegraphics[height=2in]{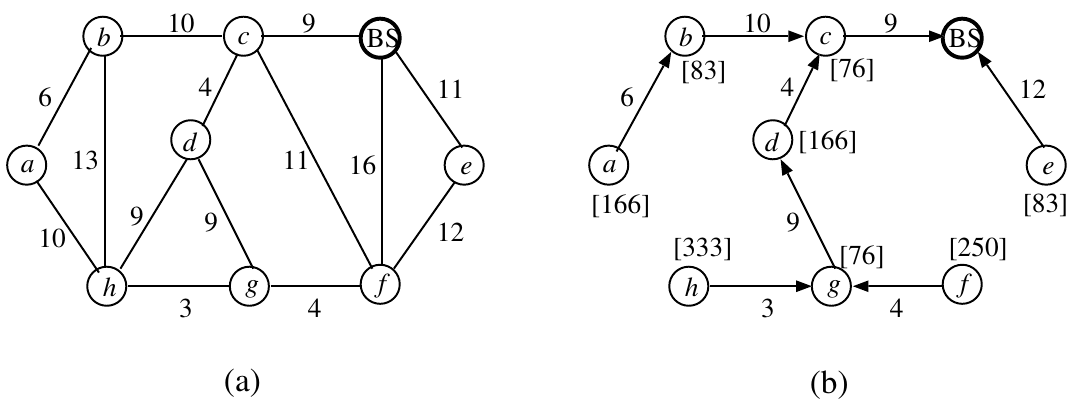}}
\caption{(a) Weighted topology graph for WSN, and (b) Scheduled Rooted Tree}
\label{fig:sensor:wrt}
\end{figure}

Figure~\ref{fig:sensor:wrt}(a) shows a WSN with $8$ sensor nodes and one BS. The numbers on the edges are the weights ($w_{i,j}$) of the edges. Figure~\ref{fig:sensor:wrt}(b) shows the generated rooted tree $T$ with the node $BS$ as the root corresponding to $G(V, RE, E, W)$ shown in Figure~\ref{fig:sensor:wrt}(a). Assuming $Rx = 2$, and $RE_i = 1000, \forall v_i \in V$, the numbers in bracket shows the expected lifetime $\mathcal{L}_i$ for each node $v_i$. Note that here $\mathcal{L}_{min} = 76$.

\begin{theorem}
The ComputeFullSchedule algorithm terminates in $O(n^2)$ steps.
\end{theorem}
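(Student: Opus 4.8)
The plan is to factor the running time into two pieces --- the number of outer iterations of the ``Step $k$'' loop, and the number of elementary time steps consumed inside a single iteration --- bound each separately, and multiply.

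First I would argue that there are only $O(n)$ outer iterations and that the process always terminates. By construction $T^0$ contains only $BS$, and the invariant to maintain at the end of Step $k$ is that $T^k$ is a rooted tree on a vertex set $V^k$ with $|V^k| = |V^{k-1}| + 1$: in each step exactly one node $v_j \notin V^{k-1}$ that is adjacent in $G$ to some node already in $V^{k-1}$ is selected by the convergecast-to-$BS$ / broadcast-from-$BS$ exchange and appended to the tree. Since the model assumes $G(V,E)$ is connected, as long as $V^{k-1} \neq V$ there is at least one such frontier node, so the loop never stalls; after exactly $n$ steps $V^k = V$ and the algorithm halts. This establishes both termination and the bound of $O(n)$ on the iteration count.

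Next I would bound the cost of one iteration by $O(n)$ time steps. Within Step $k$ the work is: every node $v_i \in V^{k-1}$ locally scans its neighbour list $N_i$ and forms its candidate tuple $(v_m,\mathcal{L}_m,v_{m'},\mathcal{L}_{m'})$; these tuples are combined by a convergecast along the edges of $T^{k-1}$ toward $BS$, costing one time step per tree level; $BS$ then picks the global maximum tuple and broadcasts the two chosen node identifiers back down $T^{k-1}$, again one time step per level; finally the designated endpoints update $E^k$ and the newly admitted node joins $V^k$. Because $T^{k-1}$ is a tree on at most $n$ vertices, its height is at most $n-1$, so the convergecast and the broadcast each finish in $O(n)$ steps, and the purely local computation at each node (evaluating $\mathcal{L}_i$, finding the best neighbour) contributes only a constant number of further steps per node in the pipeline. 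Combining the two estimates --- at most $n$ iterations, each costing $O(n)$ steps --- gives the claimed $O(n^2)$ total.

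I expect the main obstacle to be the per-iteration bound: one must pin down the (synchronous / pipelined) communication model under which ``Wait to collect all $(v_j,\mathcal{L}_i,v_{j'},\mathcal{L}_{j'})$ from in-neighbors of $v_i$'' is realised, and then verify that, since $T^{k-1}$ is acyclic, these ``wait'' instructions can neither deadlock nor force any node to block for more than the tree height, so a single convergecast followed by a single broadcast over a tree of height $O(n)$ genuinely suffices. Everything else --- progress, termination, and the $O(n)$ outer count --- follows directly from the connectivity assumption and the one-node-per-step growth invariant.
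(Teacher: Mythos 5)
Your proposal is correct and follows the natural decomposition the paper relies on implicitly: $O(n)$ growth steps (one node admitted per iteration, with connectivity of $G$ guaranteeing a nonempty frontier) times $O(n)$ time steps per iteration for the convergecast/broadcast over the partial tree $T^{k-1}$, whose height is at most $n-1$. The paper's own proof consists only of the sentence ``It is evident from the steps of the algorithm,'' so your write-up supplies exactly the missing details --- and your observation that the per-iteration cost is governed by the tree height matches the paper's remark immediately after the theorem that the $O(n^2)$ bound is attained only when the generated tree degenerates into a chain.
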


\begin{proof}
It is evident from the steps of the algorithm.
\end{proof}

However, this $O(n^2)$ bound is applicable if only if the generated tree is actually a chain of $n$ nodes, each having exactly one incoming edge and one outgoing edge (except the two end points, one being the BS). In general, the rooted tree generated as a data gathering schedule in WSNs with uniform distribution of sensors contains intermediate nodes with in-degree greater than one. Moreover, the algorithm does not use any long-distance transmission from any sensor to the BS - and it uses only one short distance transmission (per round) to its immediate out-neighbor in the partially formed rooted tree (even while constructing the very first data gathering schedule). This process thus saves energy to a great extend for each node. It is worthwhile to note that this computation is done only once for the whole lifetime of the WSN. In the next section we demonstrate the dynamic adjustment algorithm that re-constructs new rooted tree from the existing one with very little energy consumption (overhead) and delay. 

\section{Dynamic Schedule Computation}

During the operation of WSN, it periodically performs data gathering in rounds. Each sensor generates one data packet in each data-gathering round to be transmitted to BS via data gathering schedule. We use the data aggregation model/technique presented in \cite{Lindsey02}. Here, a node $v_i$ receives data packets from its down-stream neighbors (i.e., set of sensors as in-neighbors of $v_i$ in the rooted tree), each is of same size, say $k$ bits. It fuses all incoming data as well as its own sensed data, and forwards the aggregated information as a single data packet of size $k$ bits to the up-stream neighboring sensor $out(i)$ (for forwarding it to the base station).

The BS keeps track of the number of rounds the WSN is performing the data gathering operation. Moreover, BS gathers the knowledge of the longest path $\kappa$ (in number of hops) from any leaf sensor node to the BS. This information can be easily gathered using piggybacking with the normal sensed data - this details are omitted here. BS also stores the minimum expected lifetime $\mathcal{L}_{min}$ of the system as it came to know during the computation of the first schedule. As soon as BS observes that it has reached the round corresponds to $\frac{\mathcal{L}_{min}}{2} - \kappa$, BS broadcast a RESCHEDULE message with $\kappa$ as a parameter. This informs the sensors to start rescheduling operation and piggyback the re-scheduling related data-values with the sensed data. (Note that each sensor recomputes the $\mathcal{L}$ values taking into account of next $\kappa$ rounds - this details can be figured out easily.) The rescheduling finishes exactly after $\kappa$ rounds and then sensors move into the new schedule corresponds to the updated rooted tree - and the normal operation proceeds as usual. A formal presentation of the algorithm is given below.

\begin{algorithm}[htb]
\begin{algorithmic}
\STATE{{\it In Each Round $r$}}
\FOR{each $v_i \in V$}
	\IF{RESCHEDULE message received from $BS$}
		\STATE{Set $ReCompRound = 1$}
		\STATE{Re-Compute $\mathcal{L}_i$ and max tuple $(v_m, \mathcal{L}_m, v_{m'}, \mathcal{L}_{m'})$}
		\STATE{Piggyback {\bf this} with the sensed data}
		\IF{Message $(v_i, v_k)$ is received from $BS$}
			\STATE{Mark $v_k$ as in neighbor of $v_i$ in $T^k$}
			\STATE{Add $(v_k, v_i)$ in $E^k$}
		\ENDIF
	\ENDIF
	\IF{$ReCompRound == \kappa$ (i.e., $\kappa$ rounds passed)}
		\STATE{Switch to new Scheduling Tree}
	\ELSE %{(i.e., normal round - no rescheduling)}
		\STATE{Do sensing, data-fusion, aggregation and send data to $out(i)$}
	\ENDIF
\ENDFOR
\STATE{}
\STATE{{\it For $BS$}}
\IF{$(\mathcal{L}_{min} > \kappa)$ and $(r == \frac{\mathcal{L}_{min}}{2} - \kappa$)}
	\STATE{Broadcast (RESCHEDULE, $\kappa$) message}
\ENDIF
\IF{RESCHEDULE happened in round $(r-1)$}
	\STATE{Record the new $\mathcal{L}_{min}$}
	\STATE{Set $r = 0$}
\ENDIF
\end{algorithmic}
\caption{DataGatheringReScheduling} \label{algo:DataGatheringReScheduling}
\end{algorithm}

It should be noted here that re-scheduling operations are performed within the scope of normal data gathering round - and thus this process does not incur any extra delay. The energy overhead in sending these newly computed values (for re-scheduling) is very minimal as they are being appended with the normal traffic and occupy only a very small fraction of the data aggregation traffic. Let $\mathcal{L}_{min}$ denotes the minimum expected lifetime (in rounds) as computed by the ComputeFullSchedule algorithm. The number of rescheduling is close to $\log(\mathcal{L}_{min})$. It is an interesting combinatorial problem to find out the exact bound for this - and we are still working on it. However, our simulation shows that this value is close to $\log(\mathcal{L}_{min})$. Moreover, whenever the expected lifetime of the WSN comes close to $\kappa$, it does not perform any rescheduling - which is important as the sensor network is supposed to expire even before this reschedule process ends.

{\it We call our proposed technique as {\bf Lifetime with Dynamic Schedule} method and we abbreviate it as {\bf LDS} for our future discussion.}

% http://www.sciencedirect.com/science/article/pii/S108480450800009X

\section{Performance Evaluation}

We performed extensive simulation to compare the outcome of the proposed algorithm with the existing techniques. For comparison with earlier works, the values of $\epsilon_{elec}$, $\epsilon_{amp}$ and $k$ have been taken from \cite{Lindsey01} and are as follows:
\begin{itemize}
\item[] $\epsilon_{elec} = 50$ $nJ/bit$,
\item[] $\epsilon_{amp} = 100$ $pJ/bit/m^2$, and
\item[] $k = 2000$.
\item[] Initial energy = 0.25 Jules.
\end{itemize}

Random WSNs are generated with the number of sensors varying between $30$ and $200$. The sensor are distributed randomly over a $200 \times 200$ 2-D region. The $R_{max}$ is varied between $40$ and $100$. For each value of $n$, at least $100$ random graphs are generated. We compared the proposed algorithm with the data gathering techniques via minimum spanning tree (MST), shortest path tree (SPT), weighted rooted tree (WRT) schedule \cite{Bhattacharjee07}. Given a topology graph the MST algorithm generates a spanning tree that minimizes the total link cost. For SPT algorithm, the single sink/destination shortest path tree is computed using Dijkstra's algorithm, where the sink is the BS. We assumed that data gathering takes place from leaf nodes towards the root which is the base station. Further discussion can be found in \cite{Bhattacharjee07}.

\begin{figure}[!ht]
\centerline{\includegraphics[width=3.2in]{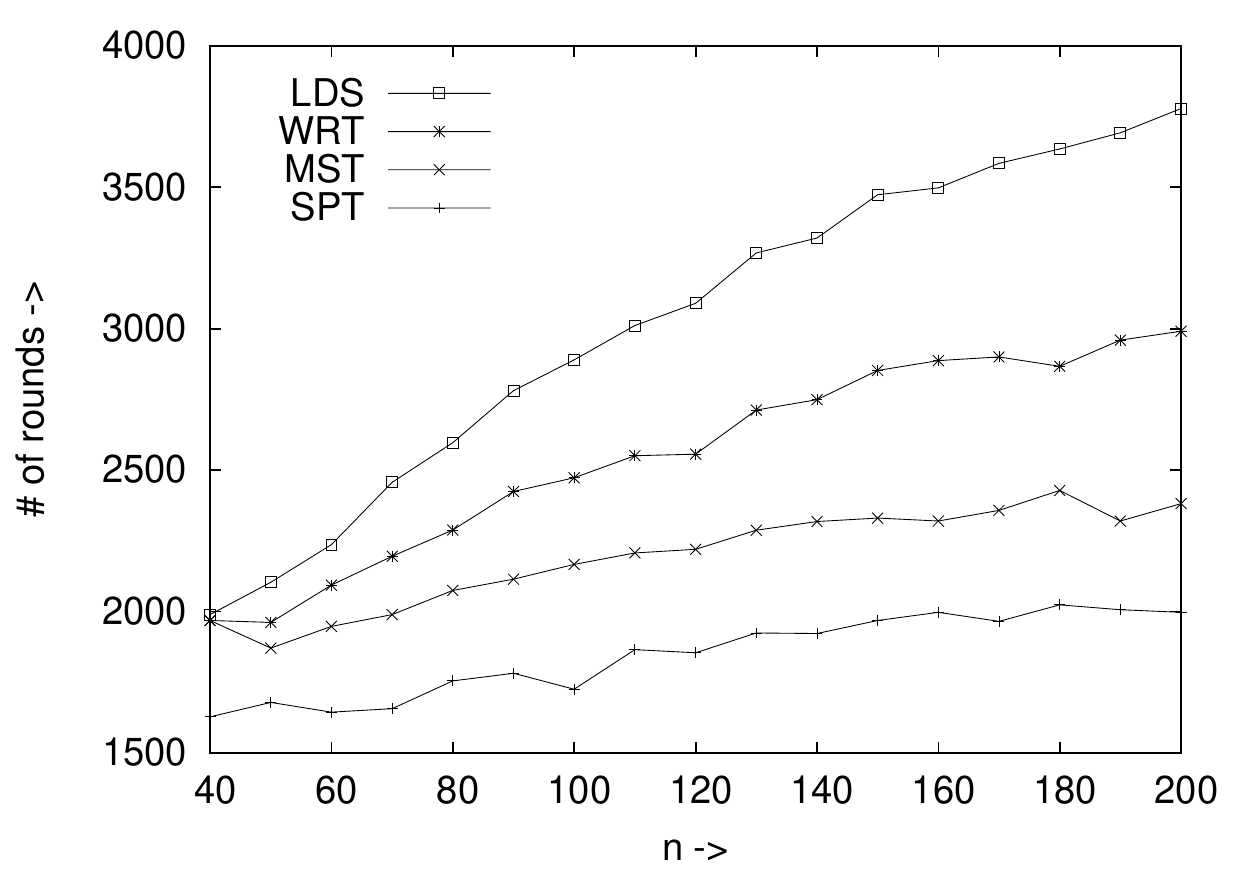}}
\caption{Lifetime (in rounds) vs number of node ($n$) for a fixed $range = 80$ units}
\label{Plot:fixRange}
\end{figure}

These three existing algorithms and our proposed algorithm (in this paper) are simulated on same graph and the corresponding lifetime values (in number of rounds) are recorded. Fig.~\ref{Plot:fixRange} shows the variation of lifetime with number of nodes with for a fix $range=80$ units and the number of nodes $n$ is varied from $40$ to $200$. We observe that our proposed Lifetime with Dynamic Schedule (LDS) method improves the lifetime by more than $25\%$ on an average compared to the existing algorithm (WRT) \cite{Bhattacharjee07} (for large sensor networks comprising $100$ or more sensors). We can readily observe from Fig.~\ref{Plot:fixRange} that it performs much much better than data-gathering schedule from MST and SPT.

\begin{figure}[!ht]
\centerline{\includegraphics[width=3.2in]{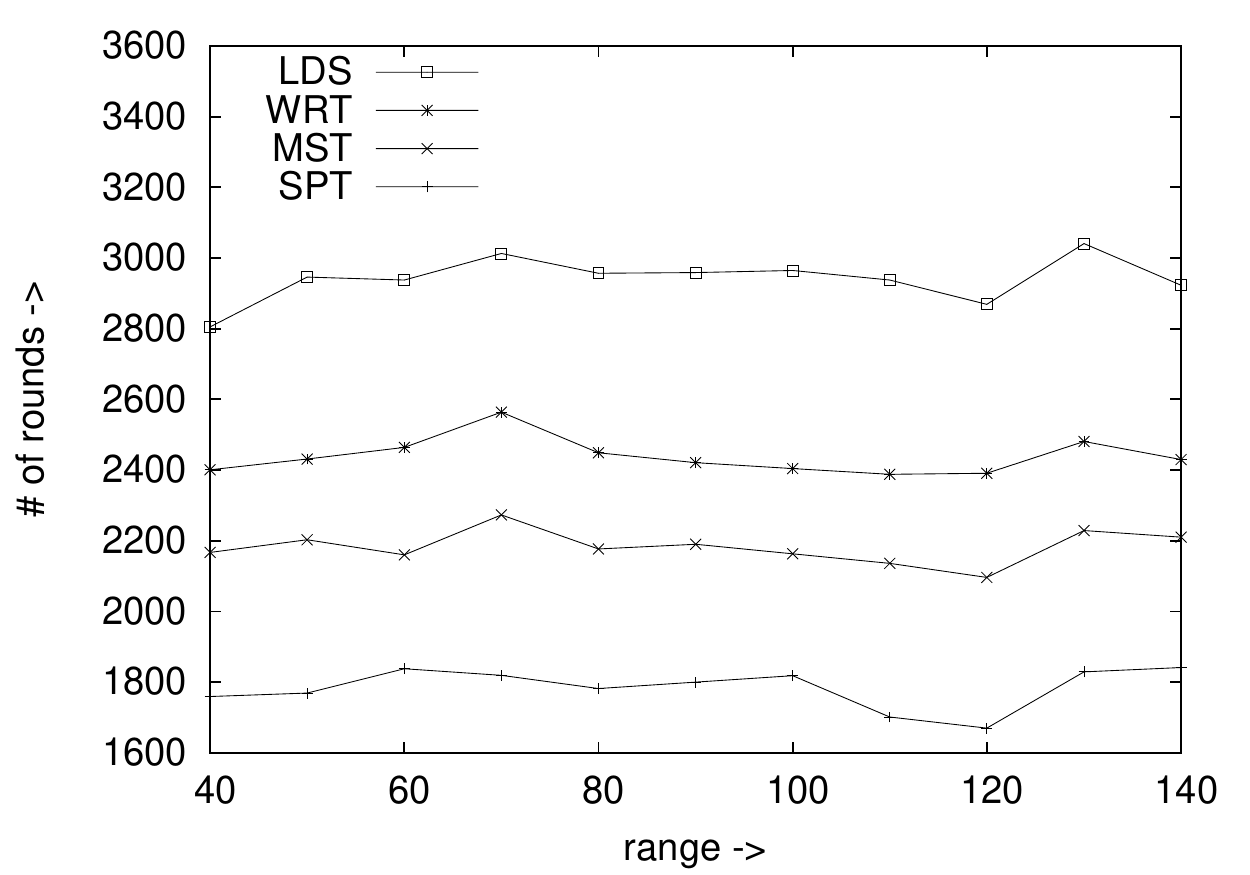}}
\caption{Lifetime (in rounds) vs maximum tranmission range ($R_{max}$)for a fixed $n=100$}
\label{Plot:fixN}
\end{figure}

Fig.~\ref{Plot:fixN} shows the variation of lifetime with the maximum transmission range of nodes. We kept the number of nodes $n = 100$. Here, $R_{max}$ is varied from $40$ to $100$. In all the cases, the proposed LDS method significant improves the lifetime of the WSN. 

In our simulation, we closely observed the average number of rescheduling performed during the lifetime of WSNs. We found that for all our simulation the number of rescheduling varies between $6$ to $11$. Moreover, in our simulation, we accounted the energy consumptions due to various rescheduling operations and first time schedule computation more accurately. Whenever there is a piggyback with the normal data we have added $10\%$ extra data and accounted the corresponding energy consumptions in our simulation.

\section{Conclusion}

We presented a distributed data gathering algorithm for improving the lifetime in WSN. Starting from a random distribution of $n$ sensors, the algorithm first computes a data gathering schedule. The expected lifetime (in number of rounds) of the WSN is computed during every rescheduling. During the course of normal data gathering, the base station initiates the rescheduling whenever half of the expected lifetime of the WSN is finished. This clever periodic re-scheduling (similar to half-life period) reduces the total number of reschedule operations but at the same time enhances the lifetime of WSN significantly. Though the computation of the first schedule may take a time proportional to $O(n^2)$ at the extreme case, the rescheduling procedure incurs no delay at all to the system. No knowledge of global topology is required for our computation. Simulation studies show that the proposed LDS algorithm improves the lifetime of WSN by more than $25\%$ compared to some of the best known existing data gathering algorithms. The proposed distributed algorithm can be  well-suited for large multihop WSNs.

\bibliographystyle{abbrv}
\bibliography{../../MyBiblio/SubhasisBRefConf,../../MyBiblio/SubhasisBRefJournal,../../MyBiblio/RefBookAdhocNetwork,../../MyBiblio/RefConfAdhocNetwork,../../MyBiblio/RefJournalAdhocNetwork}

\end{document}